\newcommand{\ceil}[1]{\lceil #1 \rceil}
\newcommand{\floor}[1]{\lfloor #1 \rfloor}
\newcommand{\COMM}[2]{{
\begin{CJK}{UTF8}{ipxm}
\ifthenelse{\equal{#1}{SK}}{\color{blue}}{
\ifthenelse{\equal{#1}{TM}}{\color{red}}{
\ifthenelse{\equal{#1}{RS}}{\color{cyan}}{
\ifthenelse{\equal{#1}{BB}}{\color{magenta}}}}}
[#1: #2]
\end{CJK}
}}
\begin{document}

\title{Linear Pseudo-Polynomial Factor Algorithm for Automaton Constrained Tree Knapsack Problem}

\titlerunning{Constrained tree knapsack problem} %optional, please use if title is longer than one line

\author{Soh Kumabe\inst{1,2} \and Takanori Maehara\inst{2} \and Ryoma Sin'ya\inst{3}}

%}{The University of Tokyo, Japan \\ RIKEN Center for Advanced Intelligence Project, Japan}{sou.kumabe@riken.jp}{}{}

%\author{Takanori Maehara}{RIKEN Center for Advanced Intelligence Project, Japan}{takanori.maehara@riken.jp}{}{}

%\author{Ryoma Sin'ya}{Akita University, Japan}{ryoma@math.akita-u.ac.jp}{}{}

\authorrunning{S. Kumabe et al.} %mandatory. First: Use abbreviated first/middle names. Second (only in severe cases): Use first author plus 'et al.'

%\author{First Author\inst{1}\orcidID{0000-1111-2222-3333} \and
%Second Author\inst{2,3}\orcidID{1111-2222-3333-4444} \and
%Third Author\inst{3}\orcidID{2222--3333-4444-5555}}
%
% First names are abbreviated in the running head.
% If there are more than two authors, 'et al.' is used.
%
\institute{% 
The University of Tokyo, Tokyo, Japan \\ 
\email{sou.kumabe@riken.jp} \and 
RIKEN Center for Advanced Intelligence Project, Tokyo, Japan \\
\email{takanori.maehara@riken.jp} \and 
Akita University, Akita, Japan \\
\email{ryoma@math.akita-u.ac.jp}
}
\maketitle              % typeset the header of the contribution
\begin{abstract}
The \emph{automaton constrained tree knapsack problem} is a variant of the knapsack problem in which the items are associated with the vertices of the tree, and we can select a subset of items that is accepted by a tree automaton.
If the capacities or the profits of items are integers, the problem can be solved in pseudo-polynomial time using the dynamic programming algorithm.
However, this algorithm has a quadratic pseudo-polynomial factor in its complexity because of the max-plus convolution.
In this study, we propose a new dynamic programming technique, called \emph{heavy-light recursive dynamic programming}, to obtain algorithms having linear pseudo-polynomial factors in the complexity.
Such algorithms can be used for solving the problems with polynomially small capacities/profits efficiently, and used for deriving efficient fully polynomial-time approximation schemes.
We also consider the $k$-subtree version problem that finds $k$ disjoint subtrees and a solution in each subtree that maximizes total profit under a budget constraint. We show that this problem can be solved in almost the same complexity as the original problem.

\keywords{knapsack problem; dynamic programming; tree automaton} %mandatory
\end{abstract}

%\author{John Q. Public}{Dummy University Computing Laboratory, [Address], Country}{johnqpublic@dummyuni.org}{https://orcid.org/0000-0002-1825-0097}{[funding]}%mandatory, please use full name; only 1 author per \author macro; first two parameters are mandatory, other parameters can be empty.
%\author{Joan R. Public}{Department of Informatics, Dummy College, [Address], Country}{joanrpublic@dummycollege.org}{[orcid]}{[funding]}

%\subjclass{
%Theory of computation $\rightarrow$ Discrete optimization,
%Theory of computation $\rightarrow$ Dynamic programming,
%Theory of computation $\rightarrow$ Tree languages
%}% mandatory: Please choose ACM 2012 classifications from https://www.acm.org/publications/class-2012 or https://dl.acm.org/ccs/ccs_flat.cfm . E.g., cite as "General and reference $\rightarrow$ General literature" or \ccsdesc[100]{General and reference~General literature}. 

%\ccsdesc[500]{Theory of computation~Discrete optimization}
%\ccsdesc[500]{Theory of computation~Dynamic programming}
%\ccsdesc[300]{Theory of computation~Tree languages}

\section{Introduction}

\subsection{Background and Motivation}

The knapsack problem seeks a set of items that maximizes total profit under a budget constraint. 
The problem is one of the most fundamental combinatorial optimization problems~\cite{kellerer2003knapsack} and has many real-world applications such as scheduling~\cite{ibarra1975fast}, network design~\cite{van2006tree}, and natural language processing~\cite{hirao2013single}.
The problem is NP-hard; however, if the profits or the weights of items are integers, the problem can be solved using the dynamic programming (DP) that runs in pseudo-polynomial time.
This algorithm is the basis for the fully-polynomial time approximation scheme (FPTAS) of the knapsack problem~\cite{ibarra1975fast,lawler1977fast}.

Here, we consider the \emph{automaton constrained tree knapsack problem}, which is defined as follows.
Let $T = (V(T), E(T))$ be a rooted tree 
where $V(T)$ is the set of vertices and $E(T)$ is the set of edges, $\mathcal{F}(\mathcal{A}) \subseteq 2^{V(T)}$ be a feasible domain represented by a top-down tree automaton (see Section~\ref{sec:treeautomaton} for details).
We denote by $n = |V(T)|$ the number of vertices in $T$.
Each $u \in V(T)$ has profit $p(u) \in \mathbb{R}_{\ge 0}$ and weight $w(u) \in \mathbb{R}_{\ge 0}$.
For a vertex subset $X \subseteq V(T)$, we define $p(X) = \sum_{u \in X} p(u)$ and $w(X) = \sum_{u \in X} w(u)$.
Let $C \in \mathbb{R}_{\ge 0}$ be the capacity.
Then, the task is to solve the following optimization problem: 
\begin{align}
\label{eq:treeknapsack}
\text{maximize} \ \ p(X) \ \ \text{subject to} \ \ w(X) \le C, \ \ X \in \mathcal{F}(\mathcal{A}),
\end{align}
This is a quite general problem since any constraint on a tree specified by a monadic second-order logic formula is represented by a tree automaton~\cite{thatcher1968generalized}.
For example, the precedence constrained problem~\cite{lukes1974efficient}, the connectivity constrained problem~\cite{hochbaum1994node}, and the independent set constrained problem~\cite{pferschy2009knapsack} are particular cases of this problem (See Examples~\ref{ex:ind}, \ref{ex:prec}, and \ref{ex:conn}).

As in the case of the standard knapsack problem, the automaton constrained tree knapsack problem can be solved by DP.
If the tree automaton has a \emph{polynomially bounded diversity of transitions} (see Section~\ref{sec:treeautomaton} for the definition), 
the complexity of the algorithm is $O(\text{poly}(n) C^2)$ time if the weights are integers, and $O(\text{poly}(n) P^2)$ time if the profits are integers, where $P$ is an upper bound of the optimal value (see Section~\ref{sec:standardDP}).
Several existing studies have considered particular cases of the problem and derived the corresponding realization of this algorithm~\cite{lukes1974efficient,hochbaum1994node,pferschy2009knapsack}.
%Examples include Lukes's algorithm for the precedence constrained problem~\cite{lukes1974efficient}, Hochbaum and Pathria's algorithm for the connectivity constrained problem~\cite{hochbaum1994node}, and Pferschy and Schauer's algorithm for the independent set constrained problem~\cite{pferschy2009knapsack}.

In this study, we focus on the \emph{pseudo-polynomial factors} $C$ or $P$ in the complexity.
The quadratic pseudo-polynomial factors of the standard DP come from merging solutions to the subtrees, which is implemented by the max-plus (or min-plus) convolution, whose current best complexity is $O(N^2 \log \log N / \log^2 N)$, where $N$ is the length of the arrays~\cite{bremner2006necklaces}. 
It is conjectured that the max-plus convolution requires $\Omega(N^{2 - \delta})$ time for any $\delta > 0$~\cite{cygan2017problems,backurs2017better,bremner2006necklaces}. 
However, quadratic pseudo-polynomial factors are sometimes unacceptable. 
For example, in practice, we often encounter the case that $C$ is polynomially greater than $n$ (e.g., $n = 100$ and $C = 100,000$). 
In this case, quadratic pseudo-polynomial factors are not desirable.
For another example, when we derive a FPTAS from the DP, we take $P \propto 1/\epsilon$; thus, a smaller degree in $P$ implies a faster algorithm with the same accuracy.
The purpose of this study is to derive algorithms for the problem that run in $O(\text{poly}(n) C)$ or $O(\text{poly}(n) P)$ time.

Thus far, the only studies that have addressed this issue are those on the precedence constrained knapsack problem.
Johnson and Niemi~\cite{johnson1983knapsacks} proposed a technique, called left-right DP, which runs in $O(n C)$ time.
Cho and Shaw~\cite{cho1997depth} proposed a variant of the left-right DP, called depth-first DP, which also runs in $O(n C)$ time.
However, we do not know what kinds of constraints (other than the precedence constraint) admit algorithms with complexity that is linear in pseudo-polynomial factors.

\subsection{Our Contribution}

In this study, we introduce a new DP technique, called \emph{heavy-light recursive dynamic programming (HLRecDP)}. 
This technique is motivated by Chekuri and Pal's recursive greedy algorithm for the $s$-$t$ path constrained monotone submodular maximization problem~\cite{chekuri2005recursive} and its generalization to the logic constrained monotone submodular maximization problem~\cite{tomas2018algorithmic}.
It also generalizes the left-right DP and depth-first DP for precedence constrained problem to the automaton constrained problem.
Formally, by using this technique, we obtain the following theorem.
From now on, we denote the logarithm of base two by $\log$.
\begin{theorem}
\label{thm:hlrecdp}
Let $T = (V(T), E(T))$ be a tree with $n$ vertices and $\mathcal{A}$ be a non-deterministic top-down tree automaton with the diversity of transitions $\delta(n)$.
Let $p(i) \in \mathbb{R}_{\ge 0}$, $w(i) \in \mathbb{Z}_{\ge 0}$, and $C \in \mathbb{Z}_{\ge 0}$.
Then, there is an algorithm for problem~\eqref{eq:treeknapsack} that runs in $O(n^{\log (1 + \delta(n))} C)$ time.
In particular, if $\delta(n) = O(1)$, the algorithm runs in $O(\text{poly}(n) C)$ time.%
\footnote{For simplicity, we only consider the case in which the weights are integers. The same result is obtained when the profits are integers.}
\end{theorem}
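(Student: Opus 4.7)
The plan is to define a recursive procedure $\text{HLRecDP}(v, q)$ which returns, for each weight $c \in \{0,1,\ldots,C\}$, the maximum profit attainable in the subtree rooted at $v$ when the automaton enters $v$ in state $q$. The recursion is guided by the tree's heavy-light decomposition: at each non-leaf $v$ I single out the \emph{heavy child} $h$, whose subtree contains the most vertices, while each remaining \emph{light child} has subtree size at most half that of $v$'s subtree. Consequently every root-to-leaf path crosses at most $O(\log n)$ light edges.

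At a call $\text{HLRecDP}(v, q)$, I enumerate the at most $\delta(n)$ transitions of the automaton at $(v, q)$; each transition prescribes whether $v$ is chosen and assigns each child $u_i$ a successor state $q_i$. For every transition, I first recursively call $\text{HLRecDP}$ on each light child, fold the returned tables into a running DP array, and then thread that array down the heavy path rooted at $h$ using an in-place update per heavy-path vertex. Taking the coordinate-wise maximum of the $\delta(n)$ per-transition arrays yields $f_{v,q}$. Correctness follows by unfolding the standard DP of Section~\ref{sec:standardDP} along the heavy-light skeleton.

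The heart of the argument is the complexity analysis. The key observation is that the merge of the pre-computed light-subtree tables into the running array along the heavy path can be performed in $O(C)$ time per heavy-path vertex---rather than the $O(C^2)$ of a generic max-plus convolution---because only a single new vertex is adjoined per step and the update is a single in-place linear scan. Letting $T(n)$ denote the running time on a subtree of $n$ vertices, this bookkeeping yields, after a careful accounting of the $\delta(n)$ transitions, the recurrence
\[
T(n)\;\le\;T(n_h)\,+\,\delta(n)\sum_{i} T(n_{l_i})\,+\,O(\delta(n)\,C),
\]
in which $n_h$ is the heavy child's subtree size and each light subtree size satisfies $n_{l_i}\le n/2$. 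Solving by induction on $n$, with the heavy term unwinding additively along each heavy chain and each light edge contributing a factor of $\delta(n)$ while halving the size, gives $T(n)=O(n^{\log(1+\delta(n))}\,C)$; when $\delta(n)=O(1)$ this is $O(\mathrm{poly}(n)\,C)$, as claimed.

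The main obstacle is justifying the $O(C)$-per-vertex merging along the heavy path: one must exhibit an in-place update rule that correctly absorbs both the next heavy-path vertex and the contributions of its light siblings into the running array in linear time, simultaneously for all budgets, and under all $\delta(n)$ transition choices at the branching point. Without this amortization a naive implementation falls back to $O(C^2)$ per vertex, reproducing only the standard quadratic bound. Once the linear in-place update is in place, the recurrence above and its induction yield the stated complexity.
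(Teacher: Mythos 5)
Your proposal reproduces the shape of the paper's complexity recurrence and the role of the heavy-light decomposition, but it is missing the one idea that makes the theorem work, and you flag that missing step yourself as ``the main obstacle.'' In your scheme each light child returns an \emph{independent} profit-versus-weight table, and you then ``fold the returned tables into a running DP array.'' Combining two independent tables over budgets $\{0,\ldots,C\}$ is exactly a max-plus convolution and costs $O(C^2)$; the ``only a single new vertex is adjoined per step'' argument applies to adjoining the heavy-path vertex itself, not to absorbing a light subtree, which may contain $\Theta(n)$ vertices and hence contributes a genuinely two-dimensional table. No in-place $O(C)$ update rule of the kind you postulate can exist for that merge without refuting the max-plus convolution lower-bound conjecture cited in the introduction, so the gap is not a technicality that ``careful accounting'' can repair.

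The paper closes this gap by changing the interface of the recursion: $\textsc{RecDP}(u,q,x)$ takes the \emph{current accumulated array} $x$ as an input parameter and returns the array obtained by optimally adding items of $T_u$ on top of $x$ (equation~\eqref{eq:ya}). The children are then processed sequentially, $y_1 = \textsc{RecDP}(v_1,q_1,x)$, $y_2 = \textsc{RecDP}(v_2,q_2,y_1)$, and so on, so that no two independently computed tables are ever convolved; the price is that each subtree may be re-entered once per distinct transition tuple, which is where the factor $\delta(n)$ on the light terms comes from, and the heavy child is exempted from that factor precisely because its initial array $x$ is the same for every rewriting rule, so all its calls can be gathered into one. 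Your recurrence $T(n) \le T(n_h) + \delta(n)\sum_i T(n_{l_i}) + O(\delta(n)\,C)$ is the right one, but it is justified by this threading-through-recursive-calls mechanism, not by the merging scheme you describe: as written, your algorithm either pays $O(C^2)$ per merge or calls each light child only $|Q|$ times with no initial array to extend, and neither variant matches the recurrence you solve.
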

This theorem gives a sufficient condition for admitting (pseudo-)polynomial time algorithms with linear pseudo-polynomial factors. 
By applying this theorem to the precedence constrained problem, we obtain $O(n C)$ time algorithm that is is equivalent to the existing left-right DP ~\cite{johnson1983knapsacks} and depth-first DP~\cite{cho1997depth} (Example~\ref{ex:prec}).
%If we apply this theorem to the precedence constrained problem, we obtain an $O(n C)$ time algorithm, which is equivalent 

We then consider the \emph{$k$-subtree} version problem.
Let $k = O(1)$ be an integer. 
Then, the problem is to find $k$ disjoint subtrees of the given tree and a feasible solution in each subtree such that the total profit is maximized under the total budget constraint. 
For example, the $k$ connected component constrained problem is the $k$-subtree version of the precedence constrained problem.
By using the property of the algorithm of Theorem~\ref{thm:hlrecdp} and divide-and-conquer techniques, we show that this problem can be solved in almost the same time complexity as the original problem.
\begin{theorem}
\label{thm:ksubtree}
Suppose that $\mathcal{A}$ is a prefix-closed top-down tree automaton with the bounded diversity of transitions, and the automaton constrained tree knapsack problem with $\mathcal{A}$ can be solved in $f(n)$ time by Algorithm~\ref{alg:hlrecdp}.
Let $k = O(1)$.
Then, there exists an algorithm for the corresponding $k$-subtree version problem that runs in the following complexity:
\begin{description}
\item [$k = 1$.] $O(f(n) \log n)$ if $f(n) = O(n C)$, and $O(f(n))$ time if $f(n) = O(n^{e} C)$ for some $e > 1$.
%and $O(f(n) \log n)$ time if $f(n) = O(n C)$%; the hidden constants do not depend on $e$.
\item [$k \ge 2$.] $O(f(n) (\log n)^{\log k})$ time if $f(n) = O(n^{e} C)$ for some $e > 1$; the hidden constant is a polynomial in $k$.
%Here, the constant factor is a polynomial in $k$.
\end{description}
\end{theorem}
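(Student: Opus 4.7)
The plan is to reduce the $k$-subtree problem to repeated invocations of Algorithm~\ref{alg:hlrecdp} by two nested divide-and-conquer strategies: one on the host tree $T$ that contributes a $\log n$ factor per level, and (when $k \ge 2$) one on the integer $k$ with depth $\log k$. A preliminary lemma will record that since $\mathcal{A}$ is prefix-closed, every reachable state is accepting, so a solution produced on any piece of $T$ is feasible in isolation; disjoint pieces of $T$ can therefore be solved independently and their per-budget tables concatenated into a globally feasible solution.

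I handle $k=1$ first. The optimal solution is characterized by the root $v$ of the chosen subtree $T_v$, so the task is to locate $v$ efficiently. My plan is to use centroid decomposition of $T$. At each recursive call on a piece of size $m$, I compute its centroid $s$ and partition the candidate roots into (i) those with $s \in T_v$, which all lie on the path from the piece's apex to $s$ and are handled by invocations of Algorithm~\ref{alg:hlrecdp} that reuse the heavy-light sharing already present in the algorithm, and (ii) those with $T_v$ entirely inside a subpiece below $s$, handled by recursion. Because $s$ is a centroid, each subpiece has size at most $m/2$, so the recursion has depth $O(\log n)$ and satisfies $T(m) = f(m) + \sum_i T(m_i)$ with $\sum_i m_i \le m$. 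A direct calculation yields $O(f(n) \log n)$ when $f(n) = O(nC)$, and by a geometric series over the levels yields $O(f(n))$ when $f(n) = O(n^e C)$ with $e > 1$, since the top level dominates.

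For $k \ge 2$, I nest a recursion on $k$ inside the tree centroid decomposition. The $k$ chosen subtrees' roots form an antichain in $T$; picking a centroid $s$ splits this antichain into roots lying in $T_s$ and roots disjoint from $T_s$. I would guess the split $(k_1, k_2)$ with $k_1 + k_2 = k$ together with a budget split $(C_1, C_2)$ with $C_1 + C_2 \le C$, recursively solve the two smaller subproblems on the two sides of $s$, and combine their DP tables by a max-plus sum of length $O(C)$ per $(k_1,k_2)$-pair. A careful choice of $s$ guarantees $\max(k_1, k_2) \le \lceil k/2 \rceil$, so the $k$-recursion has depth $\log k$; each such step inherits a $\log n$ factor from the underlying tree-centroid decomposition, which multiplies out to $(\log n)^{\log k}$ and produces the claimed $O(f(n)(\log n)^{\log k})$ bound.

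The main obstacle I expect is executing the $k \ge 2$ split cleanly: we must enumerate enough candidate separators $s$ to cover every possible optimal antichain while still amortizing to a single $\log n$ factor per level. I would resolve this by taking $s$ to be a centroid of the piece currently under consideration, so the antichain restricted to that piece is balanced by construction, and I would rely on the prefix-closedness lemma to ensure no boundary automaton-state constraint ties the two sides together. Once that structural point is in place, the recursion on $(k, n)$ is immediate and the complexity bookkeeping is routine.
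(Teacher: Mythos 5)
Your $k=1$ argument is close in spirit to the paper's (which uses the heavy path from the root rather than a centroid and obtains the same recurrence $g(n) \le g(n_1) + \cdots + g(n_s) + f(n)$ with $n_j \le n/2$), but your handling of the candidate roots on the apex-to-centroid path does not work as stated. The ``heavy-light sharing already present in the algorithm'' lets a single call of $\textsc{HLRecDP}(u_1,[0,-\infty,\ldots,-\infty])$ report the answers for all subtrees rooted along a \emph{heavy} path, because only the first (heavy) child is invoked with the unchanged initial array; an apex-to-centroid path is not a heavy path in general, and invoking the algorithm separately at each of its up to $\Theta(n)$ vertices (whose subtrees are nested and each of size $\Theta(n)$) costs $\Theta(n f(n))$. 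Replacing the centroid by the heavy-path decomposition, as in Lemma~\ref{lem:forallsubtree}, repairs this.

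The $k \ge 2$ part has two more serious gaps. First, your combine step --- guess a budget split $(C_1,C_2)$ with $C_1 + C_2 \le C$ and merge the two sides' tables --- is exactly a max-plus convolution of two length-$(C+1)$ arrays, which costs $\Theta(C^2)$ and reintroduces the quadratic pseudo-polynomial factor the whole paper is built to avoid. The paper never convolves two independently computed tables; it always threads the budget through sequential calls via the initial-array parameter $x$, and this is precisely why it needs the for-all-subtree-\emph{complement} machinery of Lemma~\ref{lem:forallsubtreecomplement} (with its segment-tree-like sharing along the heavy path), for which your proposal has no analogue. Second, the claim that choosing $s$ as a centroid guarantees $\max(k_1,k_2) \le \lceil k/2 \rceil$ is false: the centroid balances vertex counts, not the unknown optimal antichain, and all $k$ roots may lie on one side of any fixed separator. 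The paper instead performs a case analysis on the heavy path: either some light subtree contains at least $k/2$ of the components --- then the complement contains at most $k/2$ and is handled by the for-all-subtree-complement version of $\textsc{Conn}(\cdot,k/2,\cdot)$ while the guessed subtree keeps parameter $k$ --- or every light subtree contains at most $k/2$, in which case $k$ drops to $k/2$ in all of them; this yields $g(n,k) \le g(n_1,k) + \cdots + g(n_s,k) + O(g(n,k/2)\log n)$ and hence the $(\log n)^{\log k}$ factor. Finally, your preliminary lemma misstates prefix-closedness: it means each rule $(q,\sigma)\mapsto(q_1,\ldots,q_d)$ may be truncated to $(q,\sigma)\mapsto(q_1,\ldots,q_{d-1})$ (so the automaton can be restricted to a subtree obtained by deleting trailing children), not that every reachable state is accepting.
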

This theorem gives an $O(n \log n C)$ time algorithm for the connectivity constrained problem, and an $O(n^{e} C)$ time algorithm for any $e > 1$ for the $k$ connected component constrained tree knapsack problem.

\subsection*{Organization of the Paper}

The paper is organized as follows.
In Section~\ref{sec:treeautomaton}, we introduce top-down tree automata. 
In Section~\ref{sec:standardDP}, we introduce the standard DP using a top-down tree automaton. 
In Section~\ref{sec:recdp}, we prove Theorem~\ref{thm:hlrecdp} by introducing the HLRecDP.
In Section~\ref{sec:accel}, we prove Theorem~\ref{thm:ksubtree} using the divide-conquer technique with HLRecDP.

\section{Preliminaries}

\subsection{Tree Automaton}
\label{sec:treeautomaton}

A \emph{non-deterministic top-down tree automaton (``automaton'' for short)}~\cite{comon2017tree} is a tuple $\mathcal{A} = (Q, \Sigma, Q_\text{init}, \Delta)$, where $Q$ is the set of states, $\Sigma$ is a set of alphabets, $Q_\text{init} \subseteq Q$ is the set of initial states, and $\Delta$ is a set of rewriting rules of the form
\begin{align}
    Q \times \Sigma \ni (q, \sigma) \mapsto (q_1, \ldots, q_d) \in Q \times \cdots \times Q.
\end{align}
We assume that the number of states of the automaton is constant, $|Q| = O(1)$.
The automaton is \emph{prefix-closed} if $(q, \sigma) \mapsto (q_1, \ldots, q_d)$ is in $\Delta$ then $(q, \sigma) \mapsto (q_1, \ldots, q_{d-1})$ also in $\Delta$. 

The \emph{run} of the automaton is defined as follows.
Let $T = (V(T), E(T))$ be a rooted tree, and $\sigma: V(T) \to \Sigma$ be labels on the vertices.
The automaton first assigns an initial state $q \in Q_\text{init}$ to the root of the tree.
Then it processes the tree from the top (root) to the bottom (leaves).
If vertex $u \in V(T)$ has state $q \in Q$, we choose a rewriting rule $(q, \sigma(u)) \mapsto (q_1, \ldots, q_d)$ and assign the states $q_1, \ldots, q_d$ to the children $v_1, \ldots, v_d \in V(T)$ of $u$, respectively.
Note that, if no rule is applicable to $u$ and $q$, the run fails. 
The automaton accepts a labeled tree if there is at least one run from the root to the leaves in which the state of the root is in $Q_\text{init}$.

To represent a substructure of a tree using an automaton, we choose the alphabet $\Sigma = \{0, 1\}$ and identify the subgraph $X \subseteq V(T)$ as the labels $\sigma_X: V(T) \to \Sigma$ such that $\sigma_X(u) = 1$ for $u \in X$ and $\sigma_X(u) = 0$ for $u \not \in X$.
Then, the family of subsets $\mathcal{F}(\mathcal{A}) \subseteq 2^{V(T)}$ represented by this automaton is specified by
\begin{align}
    \mathcal{F}(\mathcal{A}) = \{ X \subseteq V(T) : \text{$\mathcal{A}$ accepts $T$ with label $\sigma_X$} \}.
\end{align}

To evaluate the complexity of DP, we introduce the following quantity $\delta(n)$, called the \emph{diversity of transitions}.
\begin{align}
    \delta(n) = \max_{m \le n} |\bigcup_{\text{``}(q,\sigma) \mapsto (q_1, \ldots, q_m)\text{''} \in \Delta} \{ (q_1, \ldots, q_m) \}|.
\end{align}
By definition, $\delta(n)$ is monotone in $n$.
Intuitively, $\delta(n)$ is the maximum number of subproblems in DP; see Section~\ref{sec:standardDP} below.
There is an automaton with exponentially large diversity of transitions, i.e., $\delta(n) = \Theta(|Q|^n)$, and in such case, it looks impossible to obtain $O(\text{poly}(n))$ time algorithm. 
Therefore, we assume some boundedness of $\delta(n)$.

\subsection{Quadratic Pseudo-Polynomial Factor Algorithm}
\label{sec:standardDP}

Here, we introduce the standard DP that solves the problem in $O(\text{poly}(n) C^2)$ time if the automaton has a polynomially bounded diversity of transitions~\cite{lukes1974efficient,hochbaum1994node,pferschy2009knapsack}.
We regard this as a baseline algorithm for the problem.

Let $T = (V(T), E(T))$ be a rooted tree. 
We denote by $T_u$ the subtree of $T$ rooted by $u \in V(T)$.
The algorithm computes array $x_{u,q}$ of length $C + 1$ for each $u \in V(T)$ and $q \in Q$, such that
\begin{align}
\label{eq:xa}
    x_{u,q}[c] = \max \{ & p(X) : X \subseteq V(T_u), w(X) = c,\ \text{subtree $T_u$ with labels $\sigma_X$ is} \notag \\ 
    & \text{accepted by $\mathcal{A}$, where the initial state is $q$} \}.
\end{align}
Once the array for the root vertex $r \in V(T)$ is obtained, the optimal value is computed by $\max_{q \in Q_\text{init}, c \in \{0, \ldots, C\}} x_{r,q}[c]$ in $O(|Q_\text{init}| C) = O(C)$ time.

We compute these arrays using the bottom-up DP as follows.
For each leaf, the array is immediately computed in $O(\delta(0) C) = O(C)$ time. 
Consider a vertex $u \in V(T)$ with children $v_1, \ldots, v_d \in V(T)$, such that the arrays $x_{v,q}$ are computed for all $v \in \{v_1, \ldots, v_d\}$ and $q \in Q$.
Then, 
\begin{align}
    x_{u,q}[c] = \max \{ & x_{v_1,q_1}[c_1] + \cdots + x_{v_d,q_d}[c_d] + w(u) \sigma : \notag \\ & {(q,\sigma) \mapsto (q_1,\ldots,q_d) \in \Delta}, c_1 + \cdots + c_d + w(u) \sigma = c \} 
\end{align}
%\begin{align}
%    x_{u,q}[c] = \max_{(q,\sigma) \mapsto (q_1,\ldots,q_d) \in \Delta} \max_{\substack{c_1 + \cdots + c_d + w(u) \sigma = c}} x_{v_1,q_1}[c_1] + \cdots + x_{v_d,q_d}[c_d] + w(u) \sigma.
%\end{align}
The maximization with respect to $c_1, \ldots, c_d$ is evaluated by the max-plus convolution; thus, it costs about $O(n C^2)$ time.
For the maximization with respect to $(q,\sigma) \to (q_1, \ldots, q_d) \in \Delta$, we only have to evaluate the formula for distinct $(q_1, \ldots, q_d)$.
Therefore, the complexity of evaluating \eqref{eq:xa} is $O(n \delta(n) C^2)$ time, and the total complexity is $O(n^2 \delta(n) C^2) = O(\text{poly}(n) C^2)$.

\section{Heavy-Light Recursive Dynamic Programming}
\label{sec:recdp}

In this section, we present the HLRecDP for obtaining an $O(n^{\log (1 + \delta(n))} C)$ time algorithm. 
In Section~\ref{sec:recdpbal}, we first propose the recursive dynamic programming (RecDP) technique for balanced trees. 
%If the tree has height $O(\log n)$, this technique provides the desired time complexity.
To handle non-balanced trees, in Section~\ref{sec:hlrecdp}, we combine the heavy-light decomposition to the RecDP.
%Indeed, these sections provide the proof of Theorem~\ref{thm:hlrecdp}.

\subsection{Recursive Dynamic Programming for Balanced Trees}
\label{sec:recdpbal}

Our goal is to compute arrays $\{ x_{r,q} \}_{q \in Q_\text{init}}$ for the root $r \in V(T)$ of the tree, where $x_{r,q}$ is defined in \eqref{eq:xa}.
To avoid quadratic pseudo-polynomial factors, we call the recursive procedure for the children multiple times, instead of merging subtree solutions.

Formally, we design procedure $\textsc{RecDP}(u, q, x)$, where $u \in V(T)$, $q \in Q$, and $x$ is an array of size $C+1$.
It computes array $y_{u,q,x}$ defined by %of length $C+1$ defined by
\begin{align}
\label{eq:ya}
    y_{u,q,x}[c] = \max \{ & p(X) + x[c']: X \subseteq V(T_u), w(X) + c' = c,\ \text{subtree $T_u$ with} \notag \\ 
    & \text{labels $\sigma_X$ is accepted by $\mathcal{A}$, where the initial state is $q$} \}.
\end{align}
The difference between \eqref{eq:xa} and \eqref{eq:ya} is that \eqref{eq:ya} contains the array parameter $x$, which corresponds to the ``initial values'' of the DP.
More intuitively, it returns an array that is obtained by ``adding'' items in the subtree $T_u$ optimally to the current solution represented by $x$.
By calling $\textsc{RecDP}(r, q, [0,-\infty,\ldots,-\infty])$, where $r \in V(T)$ is the root of the tree and $q \in Q_\text{init}$, we obtain the desired solution $x_{r,q}$.

Here, $\textsc{RecDP}(u, q, x)$ is implemented as follows.
If $u \in V(T)$ is a leaf, we can compute \eqref{eq:ya} in $O(C)$ time.
Consider a vertex $u \in V(T)$ that has children $v_1, \ldots, v_d \in V(T)$. 
For each rewriting rule $(q, \sigma) \mapsto (q_1, \ldots, q_d)$, we first call $\textsc{RecDP}(v_1,q_1,x)$ to obtain array $y_1 = y_{v_1,q_1,x}$. 
Then, we call $\textsc{RecDP}(v_2, q_2, y_1)$ to obtain array $y_2 = y_{v_2,q_2,y_1}$, i.e., we use the returned array $y_1$ as the initial values of the DP to the subtree rooted by $v_2$. 
By iterating this process to the last child, we obtain array $y_d = y_{v_d,q_d,y_{d-1}}$.
The solution corresponds to this rewriting rule is then obtained by 
\begin{align}
    z_{u,q,x}[c] = y_d[c - \sigma w(u)] + \sigma p(u), \quad c \in \{0, \ldots, C\}.
\end{align}
By taking the entry-wise maximum of the solutions on all of the rewriting rules, we obtain the solution to $\textsc{RecDP}(u, q, x)$.

The correctness of the above procedure is easily checked.
We evaluate the time complexity.
Let $f(n)$ be the complexity of the procedure. Let $n_1, \ldots, n_d$ be the number of vertices on the subtrees rooted by $v_1, \ldots, v_d$.
Then we have $n_1 + \cdots + n_d = n - 1$.
Because the algorithm calls the procedure recursively to each subtree at most $\delta(n)$ times, the complexity satisfies
\begin{align}
    f(n) \le \delta(n) (f(n_1) + \cdots + f(n_d)) + O(C).\footnotemark
\end{align}
\footnotetext{The additive term is naturally $O(d C)$; however, it is separated and included in the recursive terms.}%
If the tree is balanced, i.e., $n_j \le n/2$ for all $j = 1, \ldots, d$, this already provides the desired complexity: 
Without loss of generality, we can assume that $f(n)$ is convex in $n$.
Then, the maximum of the right-hand side is attained at $n_1 = \ceil{(n-1)/2}$, $n_2 = \floor{(n-1)/2}$, and $n_3 = \cdots = n_d = 0$.
Therefore,
\begin{align}
    f(n) \le \delta(n) \left( f(\ceil{(n-1)/2}) + f(\floor{(n-1)/2}) \right) + O(C).
\end{align}
By solving this inequality, we have $f(n) = O((2 \delta(n))^{\log n} C) = O(n^{1 + \log \delta(n)} C)$.
% n^c C = (2 delta) 2 T(n/2) = (2 delta) 2 (n/2)^c C + O(d C)
%       = (2 delta) 2 / 2^c n^c C + O(d C)
% (4 delta)/2^c = 1 ==> c = log(4 delta)

\subsection{Heavy-Light Recursive Dynamic Programming}
\label{sec:hlrecdp}

To obtain an $O(\text{pseudopoly}(n) C)$ time algorithm for general (i.e., non-balanced) trees, we have to make the depth of the recursion to $O(\log n)$.
The HLRecDP achieves this by using the \emph{heavy-light decomposition}~\cite{sleator1983data}.

First, we introduce the heavy-light decomposition.
Let $T = (V(T), E(T))$ be a rooted tree whose edges are directed toward the leaves.
An edge $(u, v) \in E(T)$ is a \emph{heavy edge} if $v$ has more descendants than other children do (ties are broken arbitrary).
An edge is a \emph{light edge} if it is not a heavy edge.
A vertex $v \in V(T)$ is a \emph{heavy child} of $u \in V(T)$ if $(u, v)$ is a heavy edge. A \emph{light child} is defined similarly.
A subtree rooted by a light child is referred to as a \emph{light subtree}.
The set of heavy edges forms disjoint paths, called \emph{heavy paths}. 
The tree is decomposed into the heavy paths, which is referred to as a \emph{heavy-light decomposition}.
The most important property of a heavy-light decomposition is that, for each $u \in V(T)$, the number of descendants of a light child is at most $|V(T_u)| / 2$.

Recall algorithm $\textsc{RecDP}(u, q, x)$ defined in the previous section. 
We observe that all recursive calls of $\textsc{RecDP}(v_1, q_1, x)$ to the first child has the same initial array $x$ for different $q_1$.
Thus, we can ``gather'' all recursive calls for the first child into a single recursive call.
The HLRecDP sets the heavy child as the first child to avoid an excessive number of recursive calls this child.

Formally, we define procedure $\textsc{HLRecDP}(u, x)$. 
This returns a set of arrays $\{ y_{u,q} \}_{q \in Q}$, where $y_{u,q}$ is defined in \eqref{eq:ya}.
For $v_1, \ldots, v_d \in V(T)$ and $q_1, \ldots, q_d \in Q$, we define $\textsc{HLRecDP}(v_1, \ldots, v_d, x)_{q_1, \ldots, q_d}$ as a shorthand notation of the sequential evaluation
\begin{align}
\textsc{HLRecDP}(v_d, \textsc{HLRecDP}(v_{d-1} \cdots \textsc{HLRecDP}(v_1, x)_{q_1} \cdots )_{q_{d-1}})_{q_d}.
\end{align}
Now we describe the procedure. 
Let $v_1, \ldots, v_d \in V(T)$ be the children of $u$, where $v_1$ is the heavy child.
First, we call $\textsc{HLRecDP}(v_1, x)$ and store the resulting arrays for all $q \in Q$.
Then, for each rewriting rule $(q,\sigma) \mapsto (q_1, \ldots, q_d)$, we call $\textsc{HLRecDP}(v_2, \ldots, v_d, \textsc{HLRecDP}(v_1, x)_{q_1})_{q_2, \ldots, q_d}$ and add item $u$ if $\sigma = 1$ to obtain the solution to the rewriting rule.
By taking the entry-wise maximum over the rewriting rules, we obtain the desired solution.
See Algorithm~\ref{alg:hlrecdp} for the detailed procedure.

By construction, $\textsc{HLRecDP}$ gives the same solution as $\textsc{RecDP}$; thus, it correctly solves the problem.
We evaluate the complexity as follows.
Let $n_1, \ldots, n_d$ be the number of vertices on the subtrees rooted by $v_1, \ldots, v_d$.
As same as the analysis of $\textsc{RecDP}$, the complexity $f(n)$ of the algorithm satisfies the following inequality.
\begin{align}
    f(n) \le f(n_1) + \delta(n) \left(f(n_2) + \cdots + f(n_d)\right) + O(C).
\end{align}
By the convexity of $f(n)$ and the heavy-light property, i.e., $n_j \le n/2$ ($j = 2, \ldots, d$), the maximum of the right-hand side is attained at $n_1 = \ceil{(n-1)/2}$, $n_2 = \floor{(n-1)/2}$, and $n_3 = \cdots = n_d = 0$. Thus, we have
\begin{align}
    f(n) \le f(\ceil{(n-1)/2}) + \delta(n)) f(\floor{(n-1)/2}) + O(C).
\end{align}
By solving this inequality, we have $f(n) = O(n^{\log (1 + \delta(n))} C)$.
%This proves Theorem~\ref{thm:hlrecdp}.  
\qed

\begin{remark}
There is a gap of the tractable classes between the standard DP (Section~\ref{sec:standardDP}) and the HLRecDP.
The analysis in Section~\ref{sec:standardDP} implies that we can obtain $O(\text{poly}(n) C^2)$ time algorithm if $\delta(n)$ is polynomially bounded.
On the other hand, the analysis in this section implies that if $\delta(n)$ is polynomially bounded (rather than bounded by a constant), we can only obtain an algorithm with quasi-polynomial time complexity, i.e., $n^{O(\log n)} C$.
%If the input tree has a bounded degree, Algorithm~\ref{alg:hlrecdp} runs in $O(\text{poly}(n) C)$ time regardless of $\delta(n)$.
\end{remark}

\begin{algorithm}[tb]
\caption{Heavy-Light Recursive Dynamic Programming}
\label{alg:hlrecdp}
\begin{algorithmic}[1]
\Procedure{HLRecDP}{$u$, $x$}
\State{$y_{u,q}[c] = -\infty$ for all $c \in \{0, \ldots, C\}$, $q \in Q$}
\State{Let $v_1, \ldots, v_d$ be the children of $u$, where $v_1$ is the heavy child}
\State{Call $\textsc{HLRecDP}(v_1, x)$ and store the arrays for $q \in Q$}
\For{``$(q, \sigma) \mapsto (q_1, \ldots, q_d)$'' $\in \Delta$}
\State{Let $z = \textsc{HLRecDP}(v_2, \ldots, v_d, \textsc{HLRecDP}(v_1, x)_{q_1})_{q_2, \ldots, q_d}$}
\For{$c = 0, \ldots, C$}
\State{$y_{u,q}[c] \leftarrow \max \{ y_{u,q}[c], z[c - \sigma w(u)] + \sigma p(u) \}$}
\EndFor
\EndFor
\State{\textbf{return} $\{ y_{u,q} \}_{q \in Q}$}
\EndProcedure
\end{algorithmic}
\end{algorithm}

Here, we derive several results for particular cases using our method.

\begin{example}[Independent Set Constrained Problem]
\label{ex:ind}
Let us consider the \emph{independent set constrained} tree knapsack problem whose feasible set contains no adjacent vertices.
This constraint is represented by an automaton $\mathcal{A} = (Q, \Sigma, Q_\text{init}, \Delta)$, where $Q = Q_\text{init} = \{ \mathrm{s}, \mathrm{x} \}$ and
\begin{align}
    (\mathrm{s}, 0) \mapsto (\mathrm{s}, \ldots, \mathrm{s}), \quad 
    (\mathrm{s}, 1) \mapsto (\mathrm{x}, \ldots, \mathrm{x}), \quad 
    (\mathrm{x}, 0) \mapsto (\mathrm{s}, \ldots, \mathrm{s}).
\end{align}
Here, state $\mathrm{s}$ means the vertex can be selected and state $\mathrm{x}$ means the vertex cannot be selected.
The diversity of transitions is $\delta(n) = 2$ because the rules for $(\mathrm{s}, 0)$ and $(\mathrm{x},0)$ have the same right-hand side; therefore, we can solve the independent set constrained tree knapsack problem in $O(n^{\log (1 + \delta(n))} C) = O(n^{\log 3} C) = O(n^{1.585} C)$ time. 
\end{example}

\begin{example}[Precedence Constrained Problem]
\label{ex:prec}
Let us consider the \emph{precedence constrained} tree knapsack problem whose feasible set is precedence closed, i.e., if a vertex is contained in a solution, all the precedences are also contained in the solution.
This constraint is represented by an automaton $\mathcal{A} = (Q, \Sigma, Q_\text{init}, \Delta)$, where $Q = Q_\text{init} = \{ \mathrm{s}, \mathrm{x} \}$ and
\begin{align}
    (\mathrm{s}, 0) \mapsto (\mathrm{x}, \ldots, \mathrm{x}), \quad 
    (\mathrm{s}, 1) \mapsto (\mathrm{s}, \ldots, \mathrm{s}), \quad 
    (\mathrm{x}, 0) \mapsto (\mathrm{x}, \ldots, \mathrm{x}).
\end{align}
Here, state $\mathrm{s}$ means the vertex can be selected and state $\mathrm{x}$ means the vertex cannot be selected.
Since the diversity of transitions is $\delta(n) = 2$, the algorithm runs in $O(n^{\log (1+\delta(n))} C) = O(n^{1.585} C)$ time.

This complexity can be improved further. 
If a vertex has state $\mathrm{x}$, we cannot select all of the descendants of the vertex; thus we obtain the solution for this case without calling the procedure recursively. 
Thus, the required number of recursive calls is at most one, which is for $(\mathrm{s}, 1)$.
Therefore, the algorithm runs in $O(n^{\log (1+1)} C) = O(n C)$ time.
Note that this algorithm ``coincides'' with the left-right DP~\cite{johnson1983knapsacks} and the depth-first DP~\cite{cho1997depth} in the sense that these perform the same manipulations. %in the same processing order.
\end{example}

\begin{example}[Connectivity Constrained Problem]
\label{ex:conn}
Let us consider the \emph{connectivity constrained} tree knapsack problem whose feasible set forms a connected subgraph of a given tree.
This constraint is represented by an automaton $\mathcal{A} = (Q, \Sigma, Q_\text{init}, \Delta)$, such that $Q = \{ \mathrm{s}, \mathrm{o}, \mathrm{x} \}$, $Q_\text{init} = \{ \mathrm{s} \}$ and
\begin{align}
    & (\mathrm{s}, 0) \mapsto (\mathrm{s}, \mathrm{x}, \ldots, \mathrm{x}), \quad 
    (\mathrm{s}, 0) \mapsto (\mathrm{x}, \mathrm{s}, \ldots, \mathrm{x}), \quad \ldots, \quad 
    (\mathrm{s}, 0) \mapsto (\mathrm{x}, \mathrm{x}, \ldots, s), \notag \\ 
    & (\mathrm{s}, 1) \mapsto (\mathrm{o}, \mathrm{o}, \ldots, \mathrm{o}), \quad
    (\mathrm{o}, 0) \mapsto (\mathrm{x}, \mathrm{x}, \ldots, \mathrm{x}), \quad 
    (\mathrm{o}, 1) \mapsto (\mathrm{o}, \mathrm{o}, \ldots, \mathrm{o}), \notag \\ 
    & (\mathrm{x}, 0) \mapsto (\mathrm{x}, \mathrm{x}, \ldots, \mathrm{x}).
\end{align}
Here, state $\mathrm{s}$ means the vertex can be selected, state $\mathrm{o}$ means the  vertex is now selecting, and state $\mathrm{x}$ means that the vertex cannot be selected.
Note that $\mathcal{A}$ is non-deterministic because there are $d$ rules for $(\mathrm{s}, 0)$. 
Thus, the diversity of transitions is $\delta(n) = n$, which is not bounded by a constant. 
Thus, the theorem gives only quasi-polynomial time algorithm.

To improve the performance, we make the similar observation to the precedence constraint (Example~\ref{ex:prec}).
Then, the number of recursive calls to each subtree is at most twice; one is for $(\mathrm{s}, 0)$ and the other is for $(\mathrm{s}, 1)$ and $(\mathrm{o}, 1)$. 
Therefore, the algorithm runs in $O(n^{\log (1 + 2)} C) = O(n^{1.585} C)$ time.
\end{example}

\begin{example}[$k$ Connected Component Constrained Problem]
\label{ex:kconn}
Let us consider \emph{$k$ connected component constrained} tree knapsack problem whose feasible solution is $k$ connected components. 
By using the same technique as the connectivity constrained problem (Example~\ref{ex:conn}), we obtain $n^{O(\log k)} C$ time algorithm for the problem.
Note that, if we handle $k$ as a kind of weight, we can derive $O(k n^e C) = O(n^e C)$ time algorithm for some universal constant $e$.
\end{example}

\section{k-Subtree Version Problems}
\label{sec:accel}

In this section, we consider the $k$-subtree version problems and prove Theorem~\ref{thm:ksubtree}.
We introduce two auxiliary problems: 
The first one is the \emph{for-all-subtree} problem that requires to solve the problem on each subtree $T_u$ of $T$ rooted by $u \in V(T)$.
The second one is the \emph{for-all-subtree-complement} problem that requires to solve the problem on each subtree-complement $T \setminus T_u$ of $T$ for all $u \in V(T)$.
These problems can be solved in almost the same time complexity as follows.
\begin{lemma}
\label{lem:forallsubtree}
Suppose that the automaton constrained tree knapsack problem with tree automaton $\mathcal{A}$ can be solved in $f(n) = O(n^e C)$ time by Algorithm~\ref{alg:hlrecdp}.
Then, the corresponding for-all-subtree version problem can be solved in $O(f(n) \log n)$ time if $e = 1$ and $O(f(n))$ time if $e > 1$.
%for some $e > 0$ and $O(f(n) \log n)$ if $f(n) = O(n (\log n)^e C)$ for $e \ge 0$.
\end{lemma}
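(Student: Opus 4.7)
The plan is to exploit the heavy-light structure of Algorithm~\ref{alg:hlrecdp} itself, together with the observation that a single top-level call already produces most of the desired outputs for free. Let $r$ be the root of $T$ and let $r = u_1, u_2, \ldots, u_\ell$ be the heavy path through $r$. When we invoke $\textsc{HLRecDP}(r, [0, -\infty, \ldots, -\infty])$, the first action is a recursive call on the heavy child $v_1 = u_2$ with the \emph{same} input array. By induction along the heavy path, each $u_i$ is entered with initial array $[0, -\infty, \ldots, -\infty]$, so the arrays $\{y_{u_i, q}\}_{q \in Q}$ returned at the $i$-th level of recursion coincide with $\{x_{u_i, q}\}_{q \in Q}$, which are precisely the answers to the for-all-subtree problem at $u_i$. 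This is consistent because the heavy-light decomposition of $T_{u_i}$ agrees with the restriction of the decomposition of $T$ (subtree sizes are unchanged).

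Consequently, I would modify Algorithm~\ref{alg:hlrecdp} to cache the returned arrays of every call performed on the heavy spine; this costs only an $O(|Q|\, C)$ copy per spine vertex and changes the asymptotic running time by a constant factor. It then remains to cover vertices not lying on the root's heavy path, which all lie inside the light subtrees hanging off the spine. I would handle each such subtree $T_{w_j}$ by a recursive invocation of the for-all-subtree procedure, which in turn is processed along its own heavy paths, and so on until the tree is exhausted.

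For the analysis, let $g(n)$ be the complexity of this procedure on a tree of $n$ vertices and let $s_1, \ldots, s_m$ be the sizes of the light subtrees branching from the root's heavy path. By the heavy-light property each $s_j \le n/2$, and $\sum_j s_j \le n$. This yields
\begin{align*}
g(n) \le A\, n^e C + \sum_{j=1}^m g(s_j)
\end{align*}
for some constant $A$. When $e = 1$, every vertex contributes $O(C)$ once per level and the recursion has depth $O(\log n)$, giving $g(n) = O(n C \log n) = O(f(n) \log n)$. When $e > 1$, I would prove $g(n) \le K n^e C$ by induction using the convexity bound $\sum_j s_j^e \le (\max_j s_j)^{e-1} \sum_j s_j \le (n/2)^{e-1} n = n^e / 2^{\,e-1}$, which closes as long as $K(1 - 2^{1-e}) \ge A$, i.e., for $K = A / (1 - 2^{1-e})$.

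The main obstacle is the case $e > 1$: the inductive constant $K = A/(1 - 2^{1-e})$ blows up as $e \to 1$, which is precisely why $e = 1$ must be handled separately and necessarily incurs the extra logarithmic factor. A secondary point to verify carefully is that the heavy-light decomposition is preserved under restriction to subtrees, so that the intermediate arrays harvested from the spine are indeed valid solutions for the individual problems on $T_{u_i}$ and not for some modified instance.
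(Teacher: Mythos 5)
Your proposal is correct and follows essentially the same route as the paper: a single call of $\textsc{HLRecDP}$ at the root already yields the answers for every vertex on the root's heavy path (since the heavy child is always entered first with the unchanged initial array), and the light subtrees are handled by recursion, giving the recurrence $g(n) \le f(n) + \sum_j g(s_j)$ with $s_j \le n/2$. Your explicit solution of the recurrence (level-by-level summation for $e=1$, induction with $K = A/(1-2^{1-e})$ for $e>1$) fills in a step the paper leaves implicit, but the argument is the same.
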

\begin{proof}
Let us fix a heavy path $u_1, \ldots, u_l$ that starts from the root of the tree, i.e., $u_1$ is the root and $u_l$ is a leaf.
First, we call $\textsc{HLRecDP}(u_1, [0, -\infty, \ldots, -\infty])$ to the root $u_1$ of the tree.
Then, it recursively calls $\textsc{HLRecDP}(u_i, [0, -\infty, \ldots, -\infty])$ to the the vertices $u_2, \ldots, u_l$ on the heavy path.
This means that this single call gives the solutions to the subtrees rooted by the vertices on the heavy path.

After this computation, we call the procedure recursively to the light subtrees adjacent to the heavy path to solve the problem.
The total complexity $g(n)$ satisfies
\begin{align}
	g(n) \le g(n_1) + \cdots + g(n_s) + f(n), 
\end{align}
where $n_1, \ldots, n_s$ are the sizes of the subtrees. 
By definition, $n_1 + \cdots + n_s \le n-1$.
Also, by the heavy-light property, $n_j \le n/2$ ($j = 1, \ldots, s$). 
Therefore, the maximum of the right-hand side is attained at $n_1 = \ceil{(n-1)/2}$, $n_2 = \floor{(n-1)/2}$, and $n_3 = \cdots = n_s = 0$.
Thus,
\begin{align}
	g(n) \le g(\ceil{(n-1)/2}) + g(\floor{(n-1)/2}) + f(n),
\end{align}
By solving this inequality we obtain the desired result.
\qed
\end{proof}

\begin{lemma}
\label{lem:forallsubtreecomplement}
Suppose that the automaton constrained tree knapsack problem with tree automaton $\mathcal{A}$ can be solved in $f(n) = O(n^e C)$ time by Algorithm~\ref{alg:hlrecdp}.
%Then, the corresponding for-all-subtree-complement version problem can be solved in $O(f(n) \log n)$ if $f(n) = O(n^{1+e} C)$ for $e > 0$ and $O(f(n) (\log n)^2)$ time if $f(n) = O(n (\log n)^e C)$ for $e \ge 0$.
Then, the corresponding for-all-subtree-complement version problem can be solved in $O(f(n) (\log n)^2)$ time if $e = 1$ and $O(f(n) \log n)$ time if $e > 1$.
\end{lemma}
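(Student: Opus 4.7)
The plan is to adapt the proof of Lemma~\ref{lem:forallsubtree} by inserting an extra divide-and-conquer on the heavy path, which is exactly where the additional $\log n$ factor comes from. As before, I would process a heavy path $u_1, \ldots, u_l$ starting at the current root and then recurse on the light subtrees hanging off it. Unlike the for-all-subtree case, however, the DP arrays $x_i$ for the complements $T \setminus T_{u_i}$ are not simultaneously available as intermediate values of one HLRecDP call, so they must be produced by an explicit subroutine $\textsc{PathComp}([l, r], x_l)$ whose contract is: given the context $x_l$ representing the DP for $T \setminus T_{u_l}$, produce $x_i$ for every $i \in [l, r]$. For the outermost call, $x_l$ is the trivial initial array since $T \setminus T_{u_1} = \emptyset$.

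I would implement $\textsc{PathComp}$ by divide-and-conquer on the index interval. At the midpoint $m = \lfloor (l+r)/2 \rfloor + 1$, first compute $x_m$ by running $\textsc{HLRecDP}$ on the ``band'' $T_{u_l} \setminus T_{u_m}$, that is, the path fragment $u_l, \ldots, u_{m-1}$ together with all their light subtrees, starting from $x_l$; then recurse on $[l, m-1]$ with input $x_l$ and on $[m, r]$ with input $x_m$. Because the bands produced at a fixed level of the divide-and-conquer are vertex-disjoint and jointly cover the subtree rooted at $u_1$, superadditivity of $f(n) = n^e C$ bounds the work at one level by $O(f(|T_{u_1}|))$, and the $O(\log l)$ levels give $O(f(|T_{u_1}|) \log n)$ for the entire path phase.

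After computing every $x_i$, for each light subtree $L$ rooted at a light child $w$ of some $u_i$ I would obtain the context $x_w$ for $T \setminus L$ by an additional HLRecDP-style merge of $u_i$ with its non-$L$ children on top of $x_i$ (all light children of a common $u_i$ can be handled simultaneously by a divide-and-conquer over the children list, which fits inside the same budget), and then recursively invoke the for-all-subtree-complement procedure on $L$ with input $x_w$. Letting $g(n)$ denote the overall complexity and $n_1, \ldots, n_s$ the sizes of the light subtrees hanging off the root heavy path, the heavy-light property ($n_j \le n/2$ and $\sum n_j \le n - 1$) yields
\begin{align}
g(n) \le g(n_1) + \cdots + g(n_s) + O(f(n) \log n),
\end{align}
and the same convexity argument as in Lemma~\ref{lem:forallsubtree} solves this to $O(f(n) (\log n)^2)$ when $e = 1$ and to $O(f(n) \log n)$ when $e > 1$, since in the latter case the additive term $O(f(n) \log n)$ already dominates.

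The main technical obstacle I anticipate is the band HLRecDP at the cut vertex $u_{m-1}$: its heavy child $u_m$ is removed, so $u_{m-1}$ occurs in the band with one fewer child than in $T$. This is precisely where the prefix-closed hypothesis on $\mathcal{A}$ is used: it guarantees that the truncated rewriting rule $(q, \sigma) \mapsto (q_1, \ldots, q_{d-1})$ lies in $\Delta$, so $u_{m-1}$ can legally terminate without processing $u_m$ while the output is organized as a family $\{x_{m,q}\}_{q \in Q}$ indexed by the state $q$ that the rule would have assigned to $u_m$. This is exactly the form required as the initial context for the right-half recursion, so the divide-and-conquer composes cleanly.
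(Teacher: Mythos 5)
Your overall architecture matches the paper's: fix the root heavy path, spend $O(f(n)\log n)$ on it via divide-and-conquer, recurse into the light subtrees, and solve $g(n) \le g(n_1) + \cdots + g(n_s) + O(f(n)\log n)$ with $n_j \le n/2$; your use of prefix-closedness to truncate a rule at the missing heavy child is also the right observation. However, there is a genuine gap in the phase that produces the complements of the \emph{light} children. Your path-level divide-and-conquer computes $x_i$ = the DP for $T \setminus T_{u_i}$, i.e., the complement of the \emph{entire} subtree below $u_i$. To go from $x_i$ to $T \setminus T_w$ for a light child $w$ of $u_i$, you must add back all sibling subtrees of $w$, and these include the heavy subtree $T_{u_{i+1}}$. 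That heavy subtree must therefore be processed by HLRecDP on top of $x_i$ at least once \emph{for every path vertex} $u_i$, and the heavy subtrees along the path are nested rather than disjoint. On a caterpillar (heavy path of length $\Theta(n)$, one light leaf per path vertex) this costs $\sum_i f(|T_{u_{i+1}}|) = \Theta(\sum_i (n-i)^e C) = \Theta(n \cdot f(n))$, which destroys the claimed $O(f(n)\log n)$ bound for the path phase; the assertion that this ``fits inside the same budget'' is where the argument fails.

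The paper avoids this by making the path-level divide-and-conquer compute a different family of intermediates: $y_{i,j}$ is the DP for the whole tree \emph{minus only the light subtrees of} $u_{i+1}, \ldots, u_j$ (an ``exclude an interval of light-subtree bundles'' recursion, not your ``exclude a suffix of the path'' recursion). At the leaves of that recursion one obtains, for each $u_k$, the DP for $T$ minus the light subtrees of $u_k$ alone --- with the heavy subtree $T_{u_{k+1}}$ and everything above already baked in at no extra per-vertex cost --- and then a second divide-and-conquer runs over only the light children of $u_k$, whose subtrees are disjoint across $k$ and hence sum to $O(f(n)\log n)$ overall. Your construction of the $x_i$'s (and the paper's, which does it sequentially in $O(f(n))$) is still needed, but it is not the right starting point for the light-child complements; you need the interval-exclusion quantities $y_{i,j}$ as well.
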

\begin{proof}
For vertex $u \in V(T)$, we define array $x_{u,q}$ of length $C+1$ that represents the solution on $T \setminus T_u$, where the parent of $u$ has state $q \in Q$.
We compute the arrays for all the vertices.
We define $x_{r,q} = [0, -\infty, \ldots, -\infty]$ for the root $r \in V(T)$ and all $q \in Q$. 
Let us fix a heavy path $u_1, \ldots, u_l$ that starts from the root of the tree.
We compute the arrays for the vertices on the heavy path, and for the vertices adjacent to the heavy path separately.

\medskip

\noindent \textbf{Vertices on the heavy path.} \ 
Suppose that we have $\{ x_{u_{i-1}, q} \}_{q \in Q}$. 
%We compute $\{ x_{u_i, q} \}_{q \in Q}$ for all $i = 2, \ldots, l$ from the arrays of the root $\{ x_{u_1,q} \}_{q \in Q}$.  
Let $v_1, \ldots, v_d$ be children of $u_{i-1}$, where $v_1 = u_i$.
For each rewriting rule $(q, \sigma) \mapsto (q_1, \ldots, q_{d-1}) \in \Delta$, which is a rule of length $d-1$, which will match to $v_2, \ldots, v_d$, the array corresponds to this rule is obtained by calling $\textsc{HLRecDP}(v_2, \ldots, v_d, x_{u_{i-1},q})_{q_1,\ldots,q_{d-1}}$ and by adding $u_i$ if $\sigma = 1$.
By taking the entry-wise maximum of the arrays for different rules, we obtain $\{ x_{u_i, q} \}_{q \in Q}$.
Since this computation process pays the same computational effort as $\textsc{HLRecDP}(u_1, x)$, the complexity is $f(n)$.

\medskip

\noindent \textbf{Vertices adjacent to the heavy path.} \ 
We compute $\{ x_{v, q} \}_{q \in Q}$ for all light child $v$ adjacent to the heavy path.
It is obtained by calling \textsc{HLRecDP} to all the subtrees except $T_v$; however, this method involves redundant computations. 
We reduce the complexity by storing intermediate results by a segment tree-like divide-and-conquer technique.

First, we compute arrays $y_{i, j, q_i, q_j}$ for $i = 0, \ldots, l-1$, $j = i+1, \ldots, l$, and $q_i, q_j \in Q$. 
This stores the vector obtained by calling $\textsc{HLRecDP}$ with initial array $x_{u_1, q}$ for some $q$ to the subtree except the light children of $u_{i+1}, \ldots, u_j$, where the states of $u_i$ and $u_i$ are $q_i$ and $q_j$, respectively.
Initially, we set $y_{0, l, q_0, q_l} = x_{u_1, q_0}$ for all $q_0, q_l \in Q$.
If we have $\{ y_{i, j, q_i, q_j} \}_{q_i, q_j \in Q}$ for $i+1 < j$, we can compute $\{ y_{i, m, q_i, q_m} \}_{q_i, q_m \in Q}$ where $m = \floor{(i+j)/2}$ by calling $\textsc{HLRecDP}$ to the light subtrees of $u_{m+1}, \ldots, u_j$ with initial array $z_{i, j, q_i, q_j}$.
Similarly, we can compute $\{ y_{m, j, q_{m}, q_r}\}_{q_{m}, q_r \in Q}$. % where $m = \floor{(i + j) / 2}$. 
The complexity of computing all the arrays is $O(f(n) \log n)$ since $\textsc{HLRecDP}$ is called to subtree $T_{u_i}$ at most $O(\log n)$ times.

Next, for each $u_k$ ($k = 1, \ldots, l-1$) on the heavy path, we consider the children $v_1, \ldots, v_d$ of $u_k$, where $v_1$ is the heavy child (i.e., $v_1 = u_{k+1}$). 
For each rewriting rule $(q, \sigma) \mapsto (q_1, \ldots, q_{d-1}) \in \Delta$, we compute arrays $z_{i,j,q,q_1}$ for $i = 1, \ldots, d-1$ and $j = i+1, \ldots, d$.
This stores the vector obtained by calling $\textsc{HLRecDP}$ with initial array $y_{k-1, k, q, q_1}$ to the subtrees except $v_{i+1}, \ldots, v_j$, and is computed by the same technique as $y$. 
Once the arrays are obtained, we can retrieve $x_{v_i,q}$ by taking the entry-wise maximum of $z_{i-1,i,q,q_1}$ with respect to $q_1$. 
Thus, the total complexity of this part is $O(f(n) \log n)$.

After this computation, we call the procedure recursively to the light subtrees adjacent to the heavy path to solve the problem.
The total complexity $g(n)$ satisfies 
\begin{align}
	g(n) \le g(n_1) + \cdots + g(n_s) + O(f(n) \log n),
\end{align}
where $n_1 + \cdots + n_s \le n-1$ and $n_j \le n/2$ ($j = 1, \ldots, s)$.
By solving this inequality as similar to Lemma~\ref{lem:forallsubtree}, we obtain the desired result.
\qed
\end{proof}

Now we provide an outline of the proof of Theorem~\ref{thm:ksubtree}.
We combines the algorithms for the for-all-subtree version problem and the for-all-subtree-complement version problem. % with the divide-and-conquer technique as follows.
\begin{proof}[of Theorem~\ref{thm:ksubtree}, outline]
We design algorithm $\textsc{Conn}(u, k, x)$ that computes arrays $x_{u,q,b,l}$ where $u \in V(T)$, $q \in Q$, $b \in \{0, 1\}$, and $l \in \{0, \ldots, k\}$. 
The array represents the solution to the subtree $T_u$ such that the root ($= u$) has state $q$ and is included by a subtree if $b = 1$, and $l$ subtrees are selected.
If $k = 0$, the solution is $[0, -\infty, \ldots, -\infty]$. 
If $k = 1$, we can solve the problem by solving for-all-subtree version problem since the automaton is prefix closed.
Thus, in the following, we consider $k \ge 2$.

Let $g(n, k)$ be the complexity of the algorithm for $n$ vertices with parameter $k$. We derive the recursive relation of $g$. 
We fix a heavy path, and consider light subtrees adjacent to the heavy path. 

\noindent \textbf{Case 1: There is a light subtree $T_{v}$ that contains at least $k/2$ components.} \quad 
In this case, the subtree complement $T_u \setminus T_{v}$ contains at most $k/2$ components. 
Thus, we guess such subtree $T_v$ and solve the problem on $T_u \setminus T_v$ and $T_v$ separately.
We can solve all the subtree complements simultaneously by calling the subtree-complement version of $\textsc{Conn}(u, k/2, *)$. 
Also, we can solve each subtree by calling $\textsc{Conn}(v, k, *)$.
The complexity of this approach is $g(n_1, k) + \cdots + g(n_s, k) + O(g(n, k/2) \log n)$. 

%Thus, we first find such solutions $z_{v}$ for all $v$ by applying the for-all-subtree-complement algorithm for 
%$\textsc{Conn}(u, k/2, x)$.
%Then, we call the procedure $\textsc{Conn}(v, k, z_v)$ to obtain solution to $T_{v}$.

\medskip

\noindent \textbf{Case 2: Otherwise; i.e., all the light subtrees contain at most $k/2$ components.} \quad 
We call $\textsc{Conn}(v, k/2, *)$ for all subtrees $v$, sequentially.
The complexity of this part is given by $g(n_1, k/2) + \cdots + g(n_s, k/2) \le g(n, k/2)$.

\medskip 

The total complexity $g(n, k)$ of the algorithm satisfies
\begin{align}
    g(n, k) \le g(n_1, k) + \cdots + g(n_s, k) + O( g(n, k/2) \log n ). 
    %\notag \\
    %& + g(n_1, k/2) + \cdots + g(n_s, k/2).
\end{align}
By using $n_1 + \cdots n_s \le n-1$ and $n_j \le n/2$ ($j = 1, \ldots, s$), we obtain $g(n, k) \le h(k) f(n) (\log n)^{\log k}$, where $h(k)$ is a polynomial in $k$.
\qed
\begin{comment} 

Let us consider a vertex $u \in V(T)$ and its children $v_1, \ldots, v_d \in V(T)$, where $v_1$ is the heavy child of $u$.
First, by calling the procedure recursively to the heavy child, we obtain solution $\{ x_{v_1,q,b,l} \}_{q,b,l}$.
Next, we split into two cases and handle each of them.

\medskip

\noindent \textbf{Case 1: There is a light subtree $T_{v_i}$ that contains at least $k/2$ components.} \quad 
In this case, the subtree complement $T_u \setminus T_{v_i}$ contains at most $k/2$ components. 
Thus, we first find such solution $z_{v_j}$ by applying the for-all-subtree-complement algorithm for $\textsc{Conn}(u, k/2, x)$.
Then, we call the procedure $\textsc{Conn}(v_j, k, z_{v_j})$ to obtain solution to $T_{v_j}$ for $j = 2, \ldots, d$ .
The complexity of this part is given by $g(n, k/2) \log n + g(n_2, k) + \cdots + g(n_d, k)$.

\medskip

\noindent \textbf{Case 2: Otherwise; i.e., all the light subtrees contain at most $k/2$ components.} \quad 
We call $\textsc{Conn}(v_j, k/2, *)$ to $j = 2, \ldots, d$ sequentially.
The complexity of this part is given by $g(n_2, k/2) + \cdots + g(n_d, k/2)$.

\medskip

The total complexity of the above procedure is given by
\begin{align}
    g(n, k) \le g(n_1, k) + g(n, k/2) (1 + \log n) + g(n_2, k) + \cdots + g(n_d, k) + O(k C).
\end{align}
The maximum is attained at $n_1 = n_2 = n / 2$ and $n_3 = \cdots = n_d = 0$. 
\COMM{TM}{これは嘘．$n_1 = n-1$ and $n_2 = ... = 0$ が最大になりうる．

$n_1$ にガン振りしたとき

$$ g(n-1, k) + g(n, k/2) (1 + \log n) + O(k C) $$

これを解くと明らかに性能落ちますよね
}
\end{comment} 
\end{proof}

\begin{comment}
\begin{remark}
The prefix closeness of the automaton is only required in Theorem~\ref{thm:ksubtree}, i.e., Lemmas~\ref{lem:forallsubtree} and \ref{lem:forallsubtreecomplement} hold on general automata.
\end{remark}
\end{comment}

\begin{example}[$k$ Connected Component Constrained Problem (again)]
By using this technique, the connectivity constrained problem can be solved in $O(n \log n C)$ time, and the $k$ connected component constrained problem can be solved in $O(n^{1 + e} C)$ time for any $e > 0$, since $(\log n)^k = O(n^e)$ for any $e > 0$.
\end{example}

\begin{comment}
\section{Conclusion}

In this study, we considered the automaton constrained tree knapsack problem, and introduced the HLRecDP to obtain a pseudo-polynomial time algorithm with a linear quasi-polynomial factor. 
We also proposed techniques to derive efficient algorithms for the subtree version problems.

Here, we pose two open problems. 
The first problem is characterizing the language accepted by top-down tree automata having bounded diversity of transitions.
These strictly includes the path-closed language, which is represented by deterministic top-down tree automata~\cite{viragh1981deterministic}, and strictly included by monadic second-order language, which is represented by non-deterministic top-down tree automata~\cite{thatcher1968generalized}.
Another possible direction is proving the hardness restricting the linear pseudo-polynomial factor.
\end{comment}

\bibliographystyle{splncs04}% the recommnded bibstyle
\bibliography{main}

%\appendix

\end{document}